\numberwithin{equation}{section}
\newtheorem{teor}{Theorem}[section]
\newtheorem{obs}[teor]{Remark}
\def\@email#1#2{%
 \endgroup
 \patchcmd{\titleblock@produce}
  {\frontmatter@RRAPformat}
  {\frontmatter@RRAPformat{\produce@RRAP{*#1\href{mailto:#2}{#2}}}\frontmatter@RRAPformat}
  {}{}
}%
\begin{document}

\preprint{AIP/123-QED}

\title[Topological suppression of ES]{Topologically-induced suppression of explosive synchronization}
\author{Manuel Miranda}
 \affiliation{Institute of Cross-Disciplinary Physics and Complex Systems, IFISC (UIB-CSIC), 07122 Palma de Mallorca, Spain}
\author{Mattia Frasca}%
\affiliation{ 
Department of Electrical, Electronics and Computer Science Engineering, University of Catania, Italy 
}%
\affiliation{Istituto di Analisi dei Sistemi ed Informatica “A. Ruberti”,
Consiglio Nazionale delle Ricerche (IASI-CNR), Roma, Italy}

\author{Ernesto Estrada}
  \email{estrada@ifisc.uib-csic.es}
\affiliation{%
Institute of Cross-Disciplinary Physics and Complex Systems, IFISC (UIB-CSIC), 07122 Palma de Mallorca, Spain
}%

\date{\today}


\begin{abstract}
    Nowadays, explosive synchronization is a well documented phenomenon
occurring in networks when the node frequency and its degree are correlated.
This first-order transition, which may coexists with classical synchronization,
has been recently causally linked to some pathological brain states
like epilepsy and fibromyalgia. It is then intriguing how most of
neuronal systems can operate in normal conditions avoiding explosive
synchronization. Here, we have discovered that synchronization in
networks where the oscillators are coupled via degree-biased Laplacian
operators, naturally controls the transition from explosive to standard
synchronization in neuronal-like systems. We prove analytically that explosive synchronization emerges when using this theoretical setting in star-like (neuronal) networks. As soon as this star-like network
is topologically converted to a network containing cycles, e.g., via
synaptic connections to other neurons, the explosive synchronization gives rise to
classical synchronization. This allows us to hypothesize that such
topological control of explosive synchronization could be a mechanism
for the brain to naturally work in normal, non-pathological, conditions.
\end{abstract}

\maketitle

\begin{quotation}
Here we report how explosive synchronization can be topologically
regulated in a neuron-like branched tree just by the cycles created
when it is glued (synaptically) to another neuron. The sufficient condition for this topological regulation is that the synchronization
process is controlled by a degree-biased diffusion. Therefore, there
is a transition between explosive synchronization in a branched acyclic
system to normal one once a cycle emerges in the system. This
transition represents a potential mechanism with which a neuronal
system can synchronize explosively individual neurons, and returning
to normal sync when the neuronal network is formed to avoid pathological
states like epilepsy or chronic pain. 
\end{quotation}

\section{Introduction}

Synchronization is one of the most common examples of collective behaviors
in complex systems \cite{Sync_review}. The transition from a disordered
state to the one in which all the nodes oscillates with the same phase
typically occurs in a gradual way, which is characteristic of second order transitions.
Therefore, the seminal discovery of ``explosive synchronization''
(ES) by Gomez-Gardeñes et al. \cite{ES_1} in 2011 showing that networked
Kuramoto oscillators may display a transition from a non-synchronized
state to a synchronized one in an abrupt/discontinuous way, marked
a tipping point in this field. According to the results in [\onlinecite{ES_1}]
there are two sufficient conditions for the onset of ES. First, the
network should display a scale-free (SF) topology and second, there
should be a positive correlation between the natural frequency of
an oscillator and its degree, i.e., its number of connections in the
network. Since the publication of this work there has been an explosion
of publications on ES (see \cite{ES_review} for a review), which
have been mainly focused on the conditions of its onset \cite{ES_2,zhang2013explosive,ES_4,ES_5, complexsync}.

Apart from being mathematically and aesthetically elegant, ES has
also being falsified on experiments of electronic circuits in a star
configuration, where it was demonstrated the existence of a first-order
transition towards synchronization of the phases of the networked
units \cite{ES_Exp}. The only remaining ingredient for ES to become
a fundamental pillar of complex systems would be its observation in
real-world complex networks. An important setting for searching ES
in the real scenarios are neuronal systems. The synchronization of
neurons plays a major role in reaching and keeping normal brain functions,
while its failure is related to certain brain disorders, such as Parkinsons
disease and epilepsy. In 2016, Kim et al. reported that ES conditions
are not present in the resting state of normal human brains \cite{Normal_brain}.
On the contrary, ES has emerged as an important factor of certain
pathological states of human brains. For instance, it has been demonstrated
that the fibromyalgic brain displays characteristics of ES conditions,
and that ES significantly correlates with chronic pain intensity. In
fact, Kaplan et al. \cite{Pain_2} have proposed potential palliative
intervenctions against pain which are based on the suppression of ES.
In another work, Wang et al. \cite{Epilepsy} have demonstrated that
ES-like transition occurs around the clinically defined onset of a
seizure episode from a patient with epilepsy. Accordingly, ES can be considered
as a possible mechanism for the recurrence of epileptic seizures.
Another report accounts for the possible involvement of ES in the
unexpected cognitive lucidity and communication in patients with severe
dementias, which is known as ``paradoxical lucidity'' and generally
occurs around the time of death of patients \cite{Paradoxical_lucidity}.
All of these findings point to the hypothesis that ES may be an important
factor, not in the functioning of normal brain, but in some of its
pathological states.

Due to these potential involvements of ES on pathological brain states,
some authors have investigated ways to suppress this kind of first
order transition in the network synchronization. Some of the strategies
proposed are, for instance, changing a small fraction of oscillators
to have negative couplings \cite{Suppresion_1, CompetitiveSuppression, jalan2019inhibition}, or by targeting (removing)
the hubs of the network \cite{Pain_2}. However, what it is intriguing
is how a normal brain works by naturally avoiding ES. That is, if,
as Danziger et al. \cite{Coexistence} have shown, the explosive phase
coexists with the standard phase of the Kuramoto oscillators, how
is it possible that all brains are not epileptic or continuously feeling
pain? In other words, if a neuronal system needs to synchronize for
doing its normal functioning, there should be a ``natural'' mechanism
which allows synchronization without ES. Here we investigate this question by adopting a minimalist description of the dynamics of the units, which are modelled using phase (Kuramoto-like) oscillators. Further works can adapt our theoretical framewok to more realistic synchronization scenarios in neuronal systems.

First, we consider the hypothesis that a neuronal tree, such as a
dendritic \cite{dendrite_1,dendrite_2} or a starburst \cite{startburs_1,startburst_2}
neuron can synchronize explosively, such that a perturbation rapidly
propagates across the whole neuron. However, the microscopic mechanism
governing the synchronization process is biased by the degree of the
nodes. That is, instead of surgically removing or adding hubs to the
network \cite{Pain_2}, our model naturally generalizes to a Laplacian
operator that avoids or preferentially uses the nodes of higher degrees
among all nearest neighbors of a given node \cite{hubs_Laplacian_1,hubs,hubs_Laplacian_3,Advection}.
Our main discovery here is that although such individual neuron can
synchronize explosively, when it produces a synaptic connection with
another neuron, such that cycles are formed, the ES is naturally suppressed.
We show that the inhibition of ES is more significant when the length
of the cycles formed during the synapsis is smaller. Therefore, normal
and abnormal neuronal states are induced just from the topological
nature of the synaptic connections. 

\section{Model of coupled phase oscillators}

Let $G=(V,E)$ be an undirected graph without self-loops or multiple edges, with a number of vertex $|V|=n$ and adjacency matrix $A$. For any node $i$ of the graph $G$, the degree of the node is defined as $k_i = \sum\limits_{i=1}^n A_{ij}$. Let $k$ be a column vector of the node degrees and let $K=\mathrm{diag}(k)$ be the diagonal matrix of the node degrees. We call hubs to the nodes with high degree and leaves to the nodes with $k_i=1$.

Let $\theta_i \in [0,2\pi]$ indicate the phase variable associated with node $i$ of the graph $G$. Then, the Kuramoto model on the graph $G$ can be expressed as:
\begin{equation} \label{eq:Kuramoto}
    \Dot{\theta_i} = \omega_i + \sigma \sum\limits_{j=1}^n L_{ij} \sin(\theta_i - \theta_j),
\end{equation}
where $i=1,\ldots,n$; $\sigma$ represents the coupling strength, and $\omega_i$ the natural frequency of the node $i$. The $n\times n$ matrix $L$ is the operator modeling how interactions take place in the graph. In the standard Kuramoto model this is the classical Laplacian, i.e., $L=L_c=K-A$.

In this paper, we generalize the Kuramoto model by considering, along with the classical Laplacian, two other Laplacian operators, namely the degree-biased ones \cite{hubs_Laplacian_1,hubs}. These are defined as $L_\alpha = \mathrm{diag}(K^{-\alpha} A K^\alpha)-K^{-\alpha} A K^\alpha$, where $\alpha \in \{-1,1\}$ is the biasing parameter, such that $L_1=L_a$ is the hubs-attracting Laplacian and $L_{-1}=L_r$ is the hubs-repelling Laplacian. 

Aiming at investigating the phenomenon of ES, we will study the Kuramoto model with frequency-degree correlation as in [\onlinecite{ES_1}], by setting $\omega_i = k_i$. Different correlations may not produce ES as shown in [\onlinecite{local}]. To monitor the system behavior, we will use a series of measures and order parameters that are summarized below.

The first of these measures is the Kuramoto order parameter $r$, defined as
\begin{equation}
 r(t) = \left\lvert \frac{1}{n} \sum_{j=1}^n e^{\mathrm{i} \theta_j(t)} \right\lvert
\end{equation}
with $\mathrm{i}=\sqrt{-1}$. The Kuramoto order parameter $r$ takes values in $[0,1]$, with $r\simeq 1$ indicating a high level of coherence between the phases of the oscillators, while  $r\simeq 0$ represents an uncorrelated behavior of the oscillators. For numerical purposes, we will mainly calculate $R=\langle r(t) \rangle_T$, for $T$ being a sufficiently large window of time, after the transient dynamics, where the behavior of the parameter is averaged.

In addition to that, as in [\onlinecite{Kuramoto_partial}], we also consider a measure of phase synchronization for each pair of nodes $i$ and $j$, by defining the pairwise Kuramoto order parameter:
\begin{equation}
 r_{ij} = \langle|e^{\mathrm{i} (\theta_i(t) - \theta_j(t))}|\rangle_T
\end{equation}
From this pairwise Kuramoto order parameter, we can define two indicators that are specific for the star graph and allow to measure the level of synchronization between the leaves of the graph, and the coherence between the leaves and the hub. In more detail, let the hub be node 1, then the coherence among the leaves is measured by: 
\begin{equation}
R_L = \frac{2}{(n-1)(n-2)} \sum\limits_{i=2}^{n} \sum\limits_{j>i} r_{ij}
\end{equation}
and that between the leaves and the hub is given by:
\begin{equation}
R_H = \frac{1}{n-1} \sum\limits_{i=2}^{n} r_{1i}
\end{equation}

Another information that we will monitor is the limit average value of the instantaneous frequency of the oscillators that we will denote by $\Omega_c$, $\Omega_a$ and $\Omega_r$ for the model with coupling described by the classical, hubs-attracting and hubs-repelling Laplacians respectively. This is defined as follows:
\begin{equation}
\Omega=\langle \lim\limits_{t\rightarrow +\infty} \dfrac{\theta_i(t+\delta t)-\theta_i(t)}{\delta t} \rangle_i
\end{equation}
where $\delta t$ is the integration step size and $\langle \cdot \rangle_i$ indicates averaging over the ensemble of network nodes.
    
Finally, we introduce a measure of the distance between clusters of nodes. In more detail, let $C_1,C_2,\ldots,C_m$ represent a set of clusters of nodes, namely a partition of the network, such that $C_i \cap C_j=\emptyset$ for any $i$ and $j$, and $\bigcup\limits_{i=1}^m C_i=V$. Then we can build an $m\times m$ matrix, denoted as $\Phi$, with entries $\Phi_{ij}= |m_i - m_j|$, where $m_i = \frac{1}{|C_i|(|C_i|-1)}\sum\limits_{j,l\in C_i} e^{\mathrm{i}(\theta_j(T)-\theta_l(T))}$  is the mean of the phases $\theta$ in the cluster $C_i$ and $T$ is as before. Notice that the matrix $\Phi$ is symmetric. From this matrix we will calculate the maximum eigenvalue (denoted as $\lambda_c$, $\lambda_a$ and $\lambda_r$ for the classical, hubs-attracting and hubs-repelling case, respectively), due to the fact that this eigenvalue is a better indicator than the mean or the second moment for measuring the size of the distances between the clusters \cite{Estrada07Chem}. In particular, in this paper we focus on clusters as formed by nodes with equal degree, that is $C_i = \{v \in V : k_v=i\}$ for $i=1,\ldots,m$, with $m$ equal to the maximum degree of the graph.

\section{Analytical results}

First of all, we observe that, as well as the classical Laplacian, degree-biased Laplacians are zero-row sum.  Hence, they also have a zero eigenvalue with an associated eigenvector $\Vec{1}$ for a connected graph as the ones used here. For this reason, we expect that long-time synchronized dynamics of Eqs.~\eqref{eq:Kuramoto} is supported by both the diffusion processes associated to hubs-repelling and hubs-attracting Laplacians, that is, for $L=\{L_r,L_a\}$. However, it is important to understand when such synchronous behavior can be observed, i.e., for which values of the coupling strength, and the type of transition with respect to this parameter that is found. More specifically, since an explosive transition to synchronization has been described for the classical Laplacian case~\cite{ES_1}, it is interesting to study analytically the critical point for synchronization. In doing so, we highlight how the result in~\cite{ES_1} can be retrieved as a special case of our approach, indicating the general nature of the current method. Here we consider the extreme case of a star graph, where nodes are split into nodes having degree one (the leaves) and a single node having degree $n-1$ and so acting as a hub (the center of the star). This kind of graph has been commonly used to study the simplest example of 
degree heterogeneity in many papers, including studies focusing on synchronization \cite{ES_1}. At the same time, it should be remarked that, because the degree-biased Laplacians are conceptually different operators than the classical one, in general one expects to find different characteristics, for instance in the configurations of the clusters arising in the network of oscillators or in the behavior with respect to the coupling strength. In particular, we expect to find different limit frequencies and distances between clusters when diverse Laplacians are used in Eqs.~\eqref{eq:Kuramoto}. These aspects will be investigated numerically in the next section, which also considers other network topologies.

The next result states our main finding for the star configuration. 
\begin{teor} \label{teor:analytic}
Let $G$ be a star graph with $n$ nodes and central node labelled as $1$. Let $\theta_j$ be the phase of node $j$ evolving according to the Kuramoto model \eqref{eq:Kuramoto} for the degree-biased Laplacians. Let $\Omega$ be the final angular speed that the nodes reach when the system synchronizes and consider the moving frame $\phi_j(t) = \theta_j(t) - \Omega t$. Then, for large enough coupling strength and time so the system is synchronized, we can obtain the phases of the nodes in this rotating frame as:
\begin{widetext}
\begin{equation} \label{eqs:star} \begin{split}
     \sin (\phi_{1}) & = \frac{\omega_{1}-\Omega}{\sigma\left(k_{1}+1\right) r}\left(\frac{k_{1}}{k_{j}}\right)^{\alpha}; \\
     \cos (\phi_{j}) & = \frac{\left(\omega_{j}-\Omega\right) \sin (\phi_{1}) \pm \sqrt{\left(1-\sin ^{2} (\phi_{1})\right)\left(\sigma^{2}\left(\frac{k_{1}}{k_{j}}\right)^{2 \alpha}-\left(\omega_{j}-\Omega\right)^{2}\right)}}{\sigma\left(\frac{k_{1}}{k_{j}}\right)^{\alpha}}.
\end{split}\end{equation}
\end{widetext}
for $j=2,\ldots,n$. Moreover, the critical coupling constant, namely the minimum value of the parameter $\sigma$ for which the system synchronizes, is given by:
\begin{equation} \label{critical}
    \sigma_c = |\omega_j - \Omega|\left(\frac{k_1}{k_j}\right)^{-\alpha}.
\end{equation}
 \end{teor}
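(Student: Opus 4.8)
\emph{Proof plan.} The strategy is to exploit the explicit structure of the degree-biased Laplacian on a star to collapse the $n$ coupled equations into a single hub equation and a single (generic) leaf equation, and then to solve these in the phase-locked regime. First I would evaluate $L_\alpha=\mathrm{diag}(K^{-\alpha}AK^\alpha)-K^{-\alpha}AK^\alpha$ on the star. Since the hub (node $1$) has degree $k_1=n-1$ while every leaf $j$ has $k_j=1$, the only nonzero off-diagonal entries are $(K^{-\alpha}AK^\alpha)_{1j}=(k_1/k_j)^{-\alpha}$ in the hub row and $(K^{-\alpha}AK^\alpha)_{j1}=(k_1/k_j)^{\alpha}$ in each leaf row, the diagonal being fixed by the zero-row-sum property. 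Inserting these into \eqref{eq:Kuramoto}, moving to the frame $\phi_j(t)=\theta_j(t)-\Omega t$, and imposing the synchronized (phase-locked) condition $\dot\phi_j=0$ gives one equation for the hub,
\begin{equation}
\omega_1-\Omega=\sigma\sum_{j=2}^{n}\left(\frac{k_1}{k_j}\right)^{-\alpha}\sin(\phi_1-\phi_j),
\end{equation}
together with, for each leaf $j=2,\dots,n$, the relation $\omega_j-\Omega=\sigma\left(k_1/k_j\right)^{\alpha}\sin(\phi_j-\phi_1)$.

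The key step is to turn the hub sum into something involving the global order parameter $r$. Writing $\sin(\phi_1-\phi_j)=\mathrm{Im}\,(e^{\mathrm{i}\phi_1}e^{-\mathrm{i}\phi_j})$, pulling out the common leaf coefficient $(k_1/k_j)^{-\alpha}=k_1^{-\alpha}$, and using the freedom to rotate the moving frame by a constant so that the mean phase vanishes (i.e.\ $\sum_{l=1}^{n}e^{\mathrm{i}\phi_l}=nr$ is real), I would obtain $\sum_{j=2}^{n}\sin(\phi_1-\phi_j)=\mathrm{Im}(nr\,e^{\mathrm{i}\phi_1}-1)=nr\sin\phi_1$, the excluded hub term contributing only a real constant. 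Solving for $\sin\phi_1$ and using $n=k_1+1$ then yields the first line of \eqref{eqs:star}. This gauge reduction is the part I expect to be most delicate: although the coupling here is not literally all-to-all mean field, the star topology lets the hub feel only the collective field of the leaves, and one must track carefully that the index $j=1$ is excluded from the coupling sum yet included in the order parameter.

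For the leaf equation I would treat $\sin(\phi_j-\phi_1)=(\omega_j-\Omega)/[\sigma(k_1/k_j)^{\alpha}]$ jointly with $\sin^2\phi_j+\cos^2\phi_j=1$. Expanding $\sin(\phi_j-\phi_1)=\sin\phi_j\cos\phi_1-\cos\phi_j\sin\phi_1$ and eliminating $\sin\phi_j$ produces a quadratic in $\cos\phi_j$ whose two roots are precisely the $\pm$ branches in the second line of \eqref{eqs:star}. Finally, the critical coupling follows from requiring that this phase-locked solution exist: a real $\phi_j$ needs $|\sin(\phi_j-\phi_1)|\le 1$, equivalently that the radicand $\sigma^2(k_1/k_j)^{2\alpha}-(\omega_j-\Omega)^2$ be nonnegative; this is saturated exactly at $\sigma_c=|\omega_j-\Omega|(k_1/k_j)^{-\alpha}$, which is \eqref{critical}. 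As a consistency check, putting $\alpha=0$ reduces $(k_1/k_j)^{\alpha}$ to $1$ and recovers the classical-Laplacian star results of [\onlinecite{ES_1}], confirming the general nature of the computation.
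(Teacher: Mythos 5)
Your proposal is correct and follows essentially the same route as the paper's proof: pass to the rotating frame, split into one hub and one leaf equation, rewrite the hub's coupling sum via the order parameter $r$ (with the mean phase gauged to zero), solve a quadratic in $\cos(\phi_j)$ for the leaves, and read off $\sigma_c$ from the existence condition for the locked solution. Your explicit bookkeeping of the entries of $K^{-\alpha}AK^{\alpha}$ in fact keeps the exponent of $k_1/k_j$ consistent with the theorem statement, where the paper's own proof momentarily swaps $k_j/k_1$ for $k_1/k_j$.
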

 
\begin{proof}
We start rewriting the Kuramoto equations \eqref{eq:Kuramoto} in the new moving frame, splitting them in one equation for the hub and one for the leaves as follows:
\begin{gather}
     \dot{\phi_1} = \omega_1 - \Omega + \sigma \sum_{j=2}^{n} \left(\frac{k_j}{k_1}\right)^\alpha \sin (\phi_j - \phi_1); \label{eq:hub} \\
     \dot{\phi_j} = \omega_j - \Omega + \sigma \left(\frac{k_1}{k_j}\right)^\alpha\sin(\phi_1 - \phi_j). \label{eq:leaf}
\end{gather}
 
Note that, since $r e^{\mathrm{i}\Omega t} = \frac{1}{n} \sum_{j=1}^{n} e^{\mathrm{i}\theta_j}$, in the new frame the Kuramoto order parameter $r$ can be written as:

\begin{equation*}
r = \frac{1}{n} \sum_{j=1}^{n} e^{\mathrm{i}(\theta_j -\Omega t)} = \frac{1}{n} \sum_{j=1}^{n} e^{\mathrm{i}\phi_j} = \frac{1}{n} \sum_{j=1}^{n} \cos(\phi_j).
\end{equation*}

Multiplying this equation by $e^{-\mathrm{i} \phi_1}$ and using it in Eq.~\eqref{eq:hub}, we obtain:

  \begin{equation*}
     \dot{\phi_1}= \omega_1 - \Omega - \sigma \left(\frac{k_j}{k_1}\right)^\alpha (k_1 +1) r \sin(\phi_1).
 \end{equation*}
 
Since the final angular speed is $\dot{\theta_h} = \Omega$ for $h=1,\ldots,n$, setting $\dot{\phi_1}=0$ leads to the first of Eqs.~\eqref{eqs:star}.
 
Then, expanding the sine term in Eq.~\eqref{eq:leaf} and using the identity

\begin{eqnarray*}
     \sigma^2 \left(\frac{k_j}{k_1}\right)^{2\alpha} - \left(\Omega - \omega_j\right)^2 - \sigma^2 \sin^2(\phi_1)  + \sin^2(\phi_1)\left(\Omega - \omega_j\right)^2 = \\ 
     \left(1-\sin^2(\phi_1)\right)\left(\sigma^2\left(\frac{k_j}{k_1}\right)^{2\alpha}-\left(\omega_j-\Omega\right)^2\right),
\end{eqnarray*}
we can solve for $\cos(\phi_j)$ to obtain the second of Eqs.~\eqref{eqs:star}.
 
Since this computation is meaningful only if the terms inside the square root are non-negative, we get the condition for the critical coupling for synchronization:

  \begin{equation*}
     \sigma\left(\frac{k_j}{k_1}\right)^{\alpha}\geq\left|\omega_j-\Omega\right|;
 \end{equation*}
from which Eq.~\eqref{critical} immediately follows.
\end{proof}

As in [\onlinecite{ES_1}], we can use Eqs.~\eqref{eqs:star} and \eqref{critical} to calculate the corresponding value $r_c$ for the Kuramoto order parameter. Furthermore, for any $\sigma \geq \sigma_c$ we can compute the Kuramoto order parameter using the same equations, but there are two different values depending on the choice of sign at the cosine equation. This means that for $\sigma \ge \sigma_c$ there are two different locked-phase solutions that possibly converge to different values of $r$.

Notice that, as $k_1>>k_j$ when $n$ is sufficiently large, Eq.~\eqref{critical} provides an analytical arguments for the results found in [\onlinecite{hubs}], where the Kuramoto model without frequency-degree correlation in presence of the three different Laplacian operators is studied. If we assume that the difference between the frequency of the leaves $\omega_j$ and the final angular speed $\Omega$ is the same for all operators, then it follows that the hubs-attracting Laplacian ($\alpha=1$) needs a lower coupling to synchronize than the classical Laplacian ($\alpha=0$), while the hubs-repelling Laplacian ($\alpha=-1$) needs a higher one, as indeed found in [\onlinecite{hubs}].

Moreover, here we also observe an amplification of the finding illustrated in [\onlinecite{f-d}], where it is shown that synchronization is reached earlier in nodes with large degrees, while lower degree nodes synchronize later. This may be counter-intuitive because of the names, hubs-attracting and hubs-repelling Laplacians. However, we note that, in the Kuramoto model, the term $L_{ij}$ represents the influence from node $j$ to node $i$, such that the degree-biased Laplacians effectively acts on the Kuramoto model via their adjoint operators, as defined in [\onlinecite{Advection}]. Notably, in [\onlinecite{Advection}] it is shown that the adjoint operator reverses the directions in the network, so the $ij$-th term now effectively represents the influence of node $j$ into node $i$. This causes that the flow towards low degree nodes is bigger when using the adjoint hubs-attracting Laplacian and, conversely, the adjoint hubs-repelling Laplacian shows a bigger flow towards high degree nodes.

Theorem \ref{teor:analytic} relies on the fact that all the nodes reach the same final angular speed once the system is synchronized. Thus, it is interesting to calculate the value of $\Omega$ analytically.

\begin{teor} \label{teor:finalspeed}
Under the same assumptions of Theorem \ref{teor:analytic}, the final angular speed $\Omega$ of the synchronized system is given by:
\begin{equation}\label{eq:finalspeed}
    \Omega=\frac{(n-1)^{\alpha+1} + (n-1)^{1-\alpha}}{(n-1)^\alpha+(n-1)^{1-\alpha}}
\end{equation}
\end{teor}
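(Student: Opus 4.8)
The plan is to exploit the locked-state condition together with the two coupled equations \eqref{eq:hub} and \eqref{eq:leaf} already obtained in the proof of Theorem~\ref{teor:analytic}. Once the system is phase-locked in the rotating frame, every node satisfies $\dot\phi_i=0$. First I would read off from the leaf equation \eqref{eq:leaf} the sine of each hub--leaf phase difference: setting $\dot\phi_j=0$ gives $\sigma(k_1/k_j)^\alpha\sin(\phi_1-\phi_j)=\Omega-\omega_j$, hence $\sin(\phi_j-\phi_1)=(\omega_j-\Omega)/[\sigma(k_1/k_j)^\alpha]$. Note that only this sine enters the computation, so the $\pm$ sign ambiguity of the cosine solution in \eqref{eqs:star} is irrelevant and $\Omega$ is unambiguous. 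Because all leaves share the same degree $k_j=1$ and, under the frequency--degree correlation, the same natural frequency $\omega_j=1$, this quantity is identical for every leaf.

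Next I would substitute these expressions into the hub equation \eqref{eq:hub} with $\dot\phi_1=0$. The coupling prefactor $(k_j/k_1)^\alpha$ appearing in the hub equation combines with the factor $(k_1/k_j)^{-\alpha}$ coming from the leaf sines to produce $(k_j/k_1)^{2\alpha}$, and crucially the coupling strength $\sigma$ cancels. This yields the $\sigma$-independent balance
\begin{equation*}
0=\omega_1-\Omega+\sum_{j=2}^{n}\left(\frac{k_j}{k_1}\right)^{2\alpha}(\omega_j-\Omega),
\end{equation*}
a single linear equation in the unknown $\Omega$.

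It then remains to insert the star-graph data $k_1=n-1$, $k_j=1$, the correlated frequencies $\omega_1=n-1$, $\omega_j=1$, and the fact that there are exactly $n-1$ identical leaves. The sum collapses to the single term $(n-1)\cdot(n-1)^{-2\alpha}(1-\Omega)=(n-1)^{1-2\alpha}(1-\Omega)$, leaving $\Omega\bigl[1+(n-1)^{1-2\alpha}\bigr]=(n-1)+(n-1)^{1-2\alpha}$, whose solution is $\Omega=[(n-1)+(n-1)^{1-2\alpha}]/[1+(n-1)^{1-2\alpha}]$. Multiplying numerator and denominator by $(n-1)^\alpha$ rewrites this in the symmetric form \eqref{eq:finalspeed}.

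I do not anticipate a genuine analytical obstacle: the argument is a direct algebraic consequence of the locked-state equations, amounting to a single flow-balance substitution. The only point requiring care is the bookkeeping of the exponents of $(n-1)$ when the hub and leaf factors are combined---obtaining $2\alpha$ rather than $\alpha$ or $0$---and confirming that $\sigma$ indeed drops out, which is precisely what makes $\Omega$ a purely topological (degree-dependent) quantity, independent of the coupling strength.
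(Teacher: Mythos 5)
Your proof is correct and reaches the stated formula, but the elimination mechanism differs slightly from the paper's. The paper forms the single weighted combination $\bigl(\tfrac{k_1}{k_j}\bigr)^{\alpha}\dot\phi_1+\sum_{i=2}^{n}\bigl(\tfrac{k_i}{k_1}\bigr)^{\alpha}\dot\phi_i$, chosen so that on a star the prefactors of all interaction terms become $1$ and the sines cancel pairwise by oddness; setting this to zero gives $\bigl(\tfrac{k_1}{k_j}\bigr)^{\alpha}(\omega_1-\Omega)+(n-1)\bigl(\tfrac{k_j}{k_1}\bigr)^{\alpha}(\omega_j-\Omega)=0$. You instead solve each locked leaf equation for $\sin(\phi_j-\phi_1)$ and back-substitute into the hub equation, obtaining $(\omega_1-\Omega)+(n-1)\bigl(\tfrac{k_j}{k_1}\bigr)^{2\alpha}(\omega_j-\Omega)=0$, which is the paper's balance equation multiplied by $\bigl(\tfrac{k_j}{k_1}\bigr)^{\alpha}$; both make $\sigma$ drop out and both are valid here. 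The paper's weighted-sum (conservation) form is the one that generalizes to graphs where an individual interaction term cannot be isolated from a single node's equation, whereas your substitution exploits the fact that each leaf has exactly one neighbour, so its equation contains exactly one sine. One small point in your favour: your bookkeeping of the number of leaves is cleaner --- the paper's intermediate display carries a factor $n$ where it should be $n-1$ (the sum $\sum_{i=2}^{n}$ has $n-1$ terms), a slip that is silently corrected when the final formula \eqref{eq:finalspeed} is written down, and which your count $(n-1)\cdot(n-1)^{-2\alpha}(1-\Omega)$ gets right from the start.
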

\begin{proof}
As in the proof of Theorem \ref{teor:analytic}, we use a rotating frame, hence working with equations \ref{eq:hub} and \ref{eq:leaf} for the hub and the leaves, respectively. Let us assume that the system is synchronized, then we have that $\dot{\phi_i}=0$ for all $i$.

On the other hand, set any $j \in \{2,\dots,n\}$ and calculate the sum:
\begin{widetext}
\begin{equation*}
    \left(\frac{k_1}{k_j}\right)^\alpha \dot{\phi_1} + \sum_{i=2}^n \left(\frac{k_i}{k_1}\right)^\alpha \dot{\phi_i} = \left(\frac{k_1}{k_j}\right)^\alpha (\omega_1 - \Omega) + n \left(\frac{k_j}{k_1}\right)^\alpha (\omega_j - \Omega) + \sigma \sum_{i=2}^n  \left(\sin(\phi_1 - \phi_i) + \sin(\phi_i - \phi_1) \right).
\end{equation*}
\end{widetext}

As $\dot{\phi}_i=0$ $\forall i$, the left side of the sum is equal to zero as well, while we can use the fact that the sine is an odd function to simplify the right side. After rearranging the terms we obtain:
\begin{equation*}
    \left(\left(\frac{k_1}{k_j}\right)^\alpha + n \left(\frac{k_j}{k_1}\right)^\alpha \right) \Omega= \left(\frac{k_1}{k_j}\right)^\alpha \omega_1 + n \left(\frac{k_j}{k_1}\right)^\alpha \omega_j.
\end{equation*}

Finally, using that $k_1=\omega_1=n-1$ and $k_j=\omega_j=1$, we get the formula \ref{eq:finalspeed}.
\end{proof}

The solutions given in Theorem \ref{teor:analytic} only exist if $\sigma \geq \sigma_c$ and they represent coherent states. Thus, as it is shown in [\onlinecite{ES_1}], the value $\sigma_c$ can be used to predict the drop of coherence on the backward continuation. Nevertheless, it is not sufficient to show that a first order transition occurs. For that purpose, we need to calculate the value $\sigma_s$ for which the system suddenly passes from incoherent to coherent steady states. Moreover, if the value $\sigma_s > \sigma_c$, hysteresis will appear, which is an indicator of first order transitions.

Calculating the value of $\sigma_s$ is more difficult than doing it for $\sigma_c$, even for the classical Laplacian. For doing so, we will follow the steps of articles \cite{VlasovForward} and \cite{Jiang}. To work out the calculations some assumptions are needed. Remarkably, either the Watanabe-Strogatz \cite{Watanabe-Strogatz} or Ott-Antonsen \cite{Ott-Antonsen} ansatz can be used to arrive to the same result.

\begin{teor} \label{teor:odes}
Let $G$ be a star graph with $n$ nodes and central node labelled as $1$. Let $\theta_j$ be the phase of node $j$ evolving according to the Kuramoto model \eqref{eq:Kuramoto} for the degree-biased Laplacians. Let $r e^{i\psi} = \sum\limits_{j=1}^n e^{i\theta_j}$. Then, for long enough time so the system achieves a steady state, $(r, \psi)$ are solutions to the system of ODEs:
\begin{eqnarray} \label{eqs:Rpsi}
    \dot{r}=&& \frac{\sigma}{2}\left((n-1)^{\alpha}-(n-1)^{\alpha} r^2\right)\cos{\psi}; \\
    \dot{\psi}=&& -\frac{\sigma}{2r} \left( (2(n-1)^{1-\alpha} + (n-1)^{\alpha})r^2+(n-1)^{\alpha}\right) \sin{\psi} + n - 2. \nonumber
 \end{eqnarray}

Moreover, calculating the equilibrium points of the previous system of equations, we can obtain the values of $\sigma_c$ and $\sigma_s$ for the degree-biased Laplacians:
\begin{eqnarray}
    \sigma_c&&=\frac{n-2}{(n-1)^{1-\alpha} + (n-1)^\alpha}, \nonumber \\
    \sigma_s&&=\frac{n-2}{\sqrt{(n-1)^{2\alpha} + 2n -2}}. \label{sigmas}
\end{eqnarray}
\end{teor}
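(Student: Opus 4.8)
The plan is to collapse the $n$ phase equations to the planar system \eqref{eqs:Rpsi} by exploiting the fact that, on a star, the $n-1$ leaves are \emph{identical} oscillators: specializing the model \eqref{eq:Kuramoto} with $L=L_\alpha$ to the star and using $k_1=\omega_1=n-1$, $k_j=\omega_j=1$, every leaf obeys $\dot\theta_j = 1 + \sigma(n-1)^{\alpha}\sin(\theta_1-\theta_j)$ while the hub obeys $\dot\theta_1 = (n-1) + \sigma(n-1)^{-\alpha}\sum_{j\ge2}\sin(\theta_j-\theta_1)$. Thus the leaves see one and the same hub-generated mean field; this is exactly the setting in which the Watanabe--Strogatz / Ott--Antonsen theory applies (the OA manifold being the invariant special case of the WS reduction), and it is what makes the moment closure possible.

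First I would introduce the leaves' collective variable $q = r\,e^{\mathrm{i}\Theta} = \frac{1}{n-1}\sum_{j=2}^n e^{\mathrm{i}\theta_j}$ and differentiate it, expanding the sine into exponentials. Besides a term proportional to $q$, this produces the second moment $\frac{1}{n-1}\sum_{j\ge2} e^{2\mathrm{i}\theta_j}$; the crucial step is the WS/OA closure, under which it factorizes as $q^2$. This yields the closed equation
\begin{equation*}
 \dot q = \mathrm{i} q + \tfrac{\sigma}{2}(n-1)^{\alpha}\bigl(e^{\mathrm{i}\theta_1} - e^{-\mathrm{i}\theta_1} q^2\bigr).
\end{equation*}
Rewriting the hub forcing through $q$ gives $\dot\theta_1 = (n-1) - \sigma(n-1)^{1-\alpha} r\sin(\theta_1-\Theta)$. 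Setting $\psi=\theta_1-\Theta$, taking real and imaginary parts of $\dot q\,e^{-\mathrm{i}\Theta}=\dot r + \mathrm{i} r\dot\Theta$, and subtracting $\dot\Theta$ from $\dot\theta_1$ delivers precisely \eqref{eqs:Rpsi}; note the detuning $n-2=\omega_1-\omega_j$ emerges as the constant in $\dot\psi$, so $\psi$ may be read as the hub--mean-field phase lag and $r$ as the leaves' coherence, independent of the frame.

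For the critical couplings I would impose $\dot r=\dot\psi=0$. The equation $\dot r=0$ forces either $r=1$ or $\cos\psi=0$. On the fully locked branch $r=1$, $\dot\psi=0$ gives $\sin\psi = (n-2)/\bigl(\sigma[(n-1)^{1-\alpha}+(n-1)^{\alpha}]\bigr)$, and the solvability requirement $|\sin\psi|\le1$ returns $\sigma_c$. On the branch $\cos\psi=0$ (so $\sin\psi=\pm1$), $\dot\psi=0$ expresses the coupling as
\begin{equation*}
 \sigma(r) = \frac{2(n-2)\,r}{\bigl(2(n-1)^{1-\alpha}+(n-1)^{\alpha}\bigr) r^2 + (n-1)^{\alpha}},
\end{equation*}
a partially coherent family whose saddle-node (turning) point $\mathrm{d}\sigma/\mathrm{d}r=0$ sits at $r^2=(n-1)^{\alpha}/\bigl(2(n-1)^{1-\alpha}+(n-1)^{\alpha}\bigr)$; evaluating $\sigma$ there and simplifying the product of the two bracketed factors to $(n-1)^{2\alpha}+2(n-1)$ gives $\sigma_s$.

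I expect the real obstacle to be the closure: justifying that the leaf second moment factorizes for a \emph{finite} population of identical oscillators, which is where the WS integrability (or the OA manifold as its invariant special case) must be invoked cleanly, together with careful hub/leaf bookkeeping so that $r$ and $\psi$ genuinely satisfy \eqref{eqs:Rpsi} rather than a system with spurious cross terms. A secondary, interpretive subtlety is to recognize $\sigma_s$ as the saddle-node turning point of the partially coherent branch --- not a mere existence threshold like $\sigma_c$ --- and to check $\sigma_s>\sigma_c$ (equivalently $(n-1)^{2-2\alpha}>0$), which is what certifies the hysteresis underlying the first-order transition.
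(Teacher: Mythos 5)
Your proposal is correct and follows essentially the same route as the paper: an Ott--Antonsen/Watanabe--Strogatz reduction of the leaf population to the planar $(r,\psi)$ system (with $r$ the leaves' coherence and $\psi$ the hub--mean-field lag, exactly as in the paper's proof), followed by the same equilibrium analysis, obtaining $\sigma_c$ from $\lvert\sin\psi\rvert\le 1$ on the $r=1$ branch and $\sigma_s$ from the disappearance of the $\cos\psi=0$ branch. The only cosmetic differences are that you close the finite-population moment hierarchy directly instead of passing to a continuum density $\rho(\phi,t)$ and its Fourier modes, and you locate $\sigma_s$ as the turning point of $\sigma(r)$ rather than as the vanishing of the discriminant of the paper's quadratic in $r$ --- these are equivalent.
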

\begin{proof}
We will follow the proofs in [\onlinecite{Jiang}] and [\onlinecite{XuForward}].

Define $\phi_j=\theta_j-\theta_1$ for $j=2,...,n$. Then it satisfies the equation:
\begin{equation*}
    \dot{\phi_j}=1-n-\sigma (n-1)^{-\alpha} \sin{\phi_j} - \sigma (n-1)^{\alpha} \sum_{l=2}^n \sin{\phi_l} 
\end{equation*}

Let $n$ be large enough so $\rho(\phi,t)$, the probability density function that gives us the fraction of oscillators with phase $\phi$ at time $t$, is well defined. Consider $D= \int_0^{2\pi} \rho \sin{\phi} d\phi$. Then, all oscillators have angular velocity $v= 1-n - \sigma \sin{\phi} - \sigma n D$, and we can obtain that $\rho$ satisfies the equation $\dfrac{\partial \rho}{\partial t} + \dfrac{\partial \rho v}{\partial \phi} =0$.

Now, we expand $\rho$ in Fourier series and use the Ott-Antonsen ansatz, so the $l$-th Fourier coefficient is equal to the $l$-th power of first one. Let $z$ be this first Fourier coefficient of $\rho$, then we can use the equation obtained for this probability density function to obtain the equation for $z$:
\begin{eqnarray*}
        \dot{z}=\frac{\sigma}{2} - \sigma\frac{(n-1)^(\alpha+1+(n-1)^{-\alpha})}{2}z^2 \\ - i (1-n) z + \sigma \frac{(n-1)^{\alpha+1}}{2} |z|^2
\end{eqnarray*}
where we have used that $D= Im(z)$.

Moreover, as $z$ is the first coefficient of the Fourier expansion of $\rho$, we have: $|z|= \left| \int \rho r^{i\phi} d\phi \right| = r$. Thus, we can write $z=r e^{i\psi}$ and substitute it in the equation for $z$. Multiplying by $e^{-i\psi}$, we can decouple the equations in \eqref{eqs:Rpsi} by taking the real and imaginary parts, respectively.

For obtaining the values of $\sigma$, we have to find the equilibrium points of \eqref{eqs:Rpsi}.

Solving the equation $\dot{r}=0$, we obtain that either $r=1$ or $\psi=\frac{\pi}{2}$. Substituting both values into the equation $\dot{\psi}=0$ we obtain:
\begin{widetext}
\begin{eqnarray}
    \psi=\frac{\pi}{2}, r=\frac{2(n-2) \pm \sqrt{4 (n-2)^2- 4 \sigma^2 (n-1)^\alpha (2(n-1)^{1-\alpha}+(n-1)^\alpha)}}{2\sigma (2(n-1)^{1-\alpha}+(n-1)^\alpha)}; \label{incoherent}\\
    r=1, \psi=\arcsin{\frac{n-2}{\sigma ((n-1)^{1-\alpha} + (n-1)^\alpha)}},\arcsin{\frac{-(n-2)}{\sigma ((n-1)^{1-\alpha} + (n-1)^\alpha)}}+\pi. \label{coherent}
\end{eqnarray}
\end{widetext}

Equation \eqref{incoherent} exists when the term inside the square root is positive, which gives us that these solutions exists for \\ $\sigma \leq \sigma_s=\dfrac{n-2}{\sqrt{(n-1)^{2\alpha} + 2 n - 2}}$.

On the other hand, as the domain of the $\arcsin$ function is the interval $[-1,1]$, Eq.~\eqref{coherent} is valid for \\ $\sigma \geq \sigma_c = \dfrac{n-2}{((n-1)^{1-\alpha} + (n-1)^\alpha)}$
\end{proof}

\begin{obs}
 Notice that Theorems \ref{teor:analytic} and \ref{teor:odes} provide two expressions for $\sigma_c$. Using the expression for $\Omega$ given by Theorem \ref{teor:finalspeed} it immediately follows that the two expressions are equal to each other:
 \begin{eqnarray*}
     |\omega_l -\Omega|\left(\frac{k_1}{k_l}\right)^{-\alpha} = \\ 
     (n-1)^{-\alpha} \left(\frac{(n-1)^{\alpha+1} + (n-1)^{1-\alpha}}{(n-1)^\alpha+(n-1)^{1-\alpha}} -1 \right) = \\
     \dfrac{n-2}{((n-1)^{1-\alpha} + (n-1)^\alpha)};
 \end{eqnarray*}
 
 On the one hand, Theorem \ref{teor:odes} allows to calculate this value without using the final angular speed, which a priori is not known, but it is needed to assume $n$ large.
 
 On the other hand, Theorem \ref{teor:analytic} does not need this assumption. This allows to capture the cluster formation in the model, so that the first order transition does not actually reach $r=1$ for $n$ small.
\end{obs}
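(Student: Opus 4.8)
The plan is to carry out the direct algebraic substitution indicated by the Remark: insert the closed form of $\Omega$ from Eq.~\eqref{eq:finalspeed} into the critical-coupling formula \eqref{critical} of Theorem \ref{teor:analytic}, and check that the result coincides with the formula for $\sigma_c$ in Eq.~\eqref{sigmas} of Theorem \ref{teor:odes}. For the star graph the leaves satisfy $\omega_l = k_l = 1$ and the hub satisfies $\omega_1 = k_1 = n-1$, so the factor $(k_1/k_l)^{-\alpha}$ in \eqref{critical} is simply $(n-1)^{-\alpha}$, and the whole task reduces to evaluating $|\omega_l - \Omega| = |1 - \Omega|$.

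First I would form the difference $\Omega - 1$ over the common denominator $(n-1)^\alpha + (n-1)^{1-\alpha}$ appearing in \eqref{eq:finalspeed}. The two $(n-1)^{1-\alpha}$ terms in the numerator cancel, leaving $(n-1)^{\alpha+1} - (n-1)^\alpha = (n-1)^\alpha(n-2)$, so that $\Omega - 1 = (n-1)^\alpha(n-2)/\left((n-1)^\alpha + (n-1)^{1-\alpha}\right)$. Since $n \geq 3$ we have $\Omega > 1$, hence $|1 - \Omega| = \Omega - 1$; this is precisely the middle expression displayed in the Remark. Multiplying by the prefactor $(n-1)^{-\alpha}$ then cancels the $(n-1)^\alpha$ factor in the numerator and yields $\frac{n-2}{(n-1)^\alpha + (n-1)^{1-\alpha}}$, matching \eqref{sigmas}. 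This completes the displayed chain of equalities.

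The two remaining assertions in the Remark are interpretive and require no computation. I would note that the derivation of Theorem \ref{teor:odes} proceeds through the continuum density $\rho(\phi,t)$ and the Ott--Antonsen reduction, which presuppose $n$ large but never invoke $\Omega$; whereas Theorem \ref{teor:analytic} is an exact finite-$n$ fixed-point computation that does rely on $\Omega$ yet imposes no large-$n$ assumption. Consequently the former gives $\sigma_c$ directly without knowing the (a priori unavailable) synchronization speed, while the latter, being exact for finite $n$, retains the two sign branches of the cosine solution and the associated cluster structure, so that the transition need not reach $r = 1$ for small $n$.

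The only genuine subtlety in the whole argument is resolving the absolute value with the correct sign, i.e.\ establishing $\Omega > 1$; once that is fixed, the identity is a single cancellation. I therefore expect no real obstacle beyond this bookkeeping.
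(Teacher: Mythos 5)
Your proposal is correct and follows essentially the same route as the paper: substitute $k_1=\omega_1=n-1$, $k_l=\omega_l=1$ and the closed form of $\Omega$ from Theorem \ref{teor:finalspeed} into Eq.~\eqref{critical}, cancel the $(n-1)^{\pm\alpha}$ factors, and recover the $\sigma_c$ of Eq.~\eqref{sigmas}. Your one addition --- explicitly checking $\Omega>1$ so that $|\omega_l-\Omega|=\Omega-1$ --- is a minor sign verification the paper leaves implicit, and your reading of the two interpretive claims (large-$n$ via the Ott--Antonsen density in Theorem \ref{teor:odes}, exact finite-$n$ fixed points with cluster structure in Theorem \ref{teor:analytic}) matches the paper's intent.
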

\begin{obs}
We can study the hysteresis width of the system by looking at Eq.~\eqref{sigmas}. As mentioned  in [\onlinecite{Jiang}], for the classical Laplacian ($\alpha=0$), $\lim\limits_{n \to \infty} \sigma_c=1$ and $\lim\limits_{n \to \infty} \sigma_s=\infty$, so the hysteresis gap increases as the star has more leaves. For the degree-biased operators, the situation changes.

In the hubs-attracting case ($\alpha=1$) it is interesting to notice that the value of $\sigma_c$ coincides with the one for the classical Laplacian, so $\lim\limits_{n \to \infty} \sigma_c=1$. But $\lim\limits_{n \to \infty} \sigma_s=1$ as well, so as the star size increases, the hysteresis tends to disappear.

If we take $\alpha=-1$, namely in the hubs-repelling case, then $\lim\limits_{n \to \infty} \sigma_s=\infty$ as for the classical Laplacian, although their values are different. This means that the hysteresis gap also increases with the size of the star. Nevertheless, as $\lim\limits_{n \to \infty} \sigma_c=0$, this gaps is bigger than in the classical case and, in the limit, we arrive to a situation where there aren't transitions between coherent and incoherent states when varying $\sigma$.
\end{obs}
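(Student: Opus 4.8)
The plan is to treat this Remark as a direct asymptotic evaluation of the two closed forms already obtained in Theorem~\ref{teor:odes}, namely $\sigma_c = (n-2)/\bigl((n-1)^{1-\alpha} + (n-1)^\alpha\bigr)$ and $\sigma_s = (n-2)/\sqrt{(n-1)^{2\alpha} + 2n - 2}$ from Eq.~\eqref{sigmas}. No further dynamical analysis is needed: I would substitute each admissible biasing value $\alpha \in \{-1, 0, 1\}$ and compare the leading-order growth in $n$ of numerator against denominator. One structural observation organizes the $\sigma_c$ computations at once: the denominator $(n-1)^{1-\alpha} + (n-1)^\alpha$ is invariant under the reflection $\alpha \mapsto 1-\alpha$, which is precisely why the classical ($\alpha=0$) and hubs-attracting ($\alpha=1$) cases must share the same $\sigma_c$.

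First I would settle the classical and hubs-attracting cases together. For both $\alpha=0$ and $\alpha=1$ the reflection symmetry gives the denominator $(n-1)+1 = n$, hence $\sigma_c = (n-2)/n \to 1$. The two cases then differ only through $\sigma_s$: at $\alpha=0$ the radicand is $1 + 2n - 2 = 2n-1$, so $\sigma_s \sim n/\sqrt{2n} \to \infty$, recovering the divergent hysteresis gap cited from [\onlinecite{Jiang}]; at $\alpha=1$ the key algebraic step is the exact cancellation $(n-1)^2 + 2n - 2 = n^2 - 1$, giving $\sigma_s = (n-2)/\sqrt{n^2-1} \to 1$. Because $\sqrt{n^2-1} < n$ one has $\sigma_s > \sigma_c$ at every finite $n$, yet both sequences tend to $1$, so the gap $\sigma_s - \sigma_c$ vanishes and hysteresis disappears in the large-star limit.

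Next I would handle the hubs-repelling case $\alpha=-1$. Here the $\sigma_c$ denominator is dominated by $(n-1)^2$, so $\sigma_c \sim n/n^2 = 1/n \to 0$, while the $\sigma_s$ radicand $(n-1)^{-2} + 2n - 2 \sim 2n$ reproduces $\sigma_s \sim \sqrt{n/2} \to \infty$ as in the classical case. Consequently the gap again diverges, but now it strictly exceeds the classical gap since $\sigma_c \to 0$ instead of $1$; in the limit the coherent and incoherent branches coexist for every $\sigma$, which is the stated absence of a transition.

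The computations are elementary, so there is no genuine obstacle, only two points that merit care. The first is the exact simplification $(n-1)^2 + 2n - 2 = n^2 - 1$ in the attracting case: missing this cancellation would wrongly suggest $\sigma_s$ diverges. The second is logical rather than computational --- to conclude that hysteresis \emph{disappears} for $\alpha=1$ it is not enough that the two limits coincide; one must note that $\sigma_s - \sigma_c$ is a difference of sequences both converging to $1$ and hence tends to $0$, which I would state explicitly.
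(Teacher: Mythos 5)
Your proposal is correct and takes essentially the same approach as the paper: the remark is a direct asymptotic evaluation of the closed forms in Eq.~\eqref{sigmas} for $\alpha \in \{-1,0,1\}$, and all your computations check out, including the key cancellation $(n-1)^2 + 2n - 2 = n^2 - 1$ that yields $\sigma_s \to 1$ in the hubs-attracting case. Your two refinements --- the $\alpha \mapsto 1-\alpha$ symmetry of the $\sigma_c$ denominator explaining why $\alpha=0$ and $\alpha=1$ share the same $\sigma_c$, and the explicit note that $\sigma_s > \sigma_c$ at finite $n$ while $\sigma_s - \sigma_c \to 0$ (consistent with the paper's numerical observation of a small but nonvanishing gap at small $n$) --- are sound additions to the same argument rather than a different method.
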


\section{Numerical results}

\subsection{Star graphs}

We start by studying the model of coupled phase oscillators in star graphs in order to validate the analytical results obtained in the previous section. This kind of graphs is interesting as it represents a simplified case of what we expect to find in heavily branched neurons. The nodes of the star graph, displaying only two possible values for their degrees, also allow for a clear analysis of the effects of the degree-biased Laplacians that we are studying. Finally, another reason to study these simple graphs is that ES also occurs when the model with the classical Laplacian operator is considered for these graphs \cite{ES_1}. In particular, in this section we analyze how synchronization is affected by the application of degree-biased Laplacians. We also study the steady-state of the system, obtaining the angular speed of the nodes as well as the distance between the hub and the leaves of the star graph for different number of nodes.

To calculate the critical coupling given by Eq.~\eqref{critical}, we first need to derive the final angular speed $\Omega$ reached by the system. In Fig.~\ref{fig:gamma1speed} we illustrate the analytical results obtained by means of Theorem 3.2 and compare them with the numerical simulations carried out on graphs with a number of nodes varying
from $n = 10$ to $n = 40$.

We observe that the fastest speed is obtained for the hubs-attracting model, followed by the classical and then the hubs-repelling case. In particular, we note the remarkable fact that in the hubs-attracting model the final frequency grows linearly with the size of the star. The angular speed in the hubs-repelling case is the only one that decreases with the number of nodes of the graph.

The analysis of the final angular speed $\Omega$ reveals that the term $\omega_l - \Omega$ appearing in Eq.~\eqref{critical} increases for the hubs-attracting operator and decreases for the hubs-repelling one, while the term $\left(\frac{k_1}{k_j}\right)^{-\alpha}$ does the opposite. Consequently, it is not trivial to predict which Laplacian operator yields the lowest critical coupling.

From Eq.~\eqref{eqs:star} we derive that leaves and hub have different phases, such that a cluster of phase-synchronous nodes corresponding to the leaves emerges. To better characterize this interesting feature, we have computed the distance between the two clusters appearing in the star graph (the one formed by the leaves and the singleton one formed by only the hub), as shown in Fig.~\ref{fig:gamma1dist}. There is a similarity in the behaviour of the cluster distance and the angular speed of the system. As for the classical and hubs-attracting Laplacians, they both increase with the number of nodes, while for the hubs-repelling Laplacian they both decrease.

\begin{figure*}[t]
    \centering
    \subfigure[]{\includegraphics[width=0.31\linewidth]{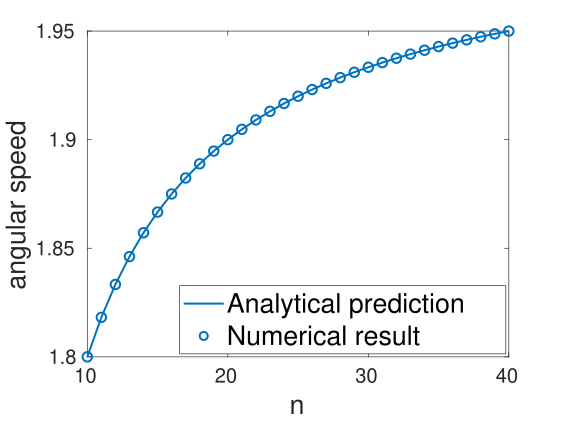}\label{fig:gamma1speedc}}
    \subfigure[]{\includegraphics[width=0.31\linewidth]{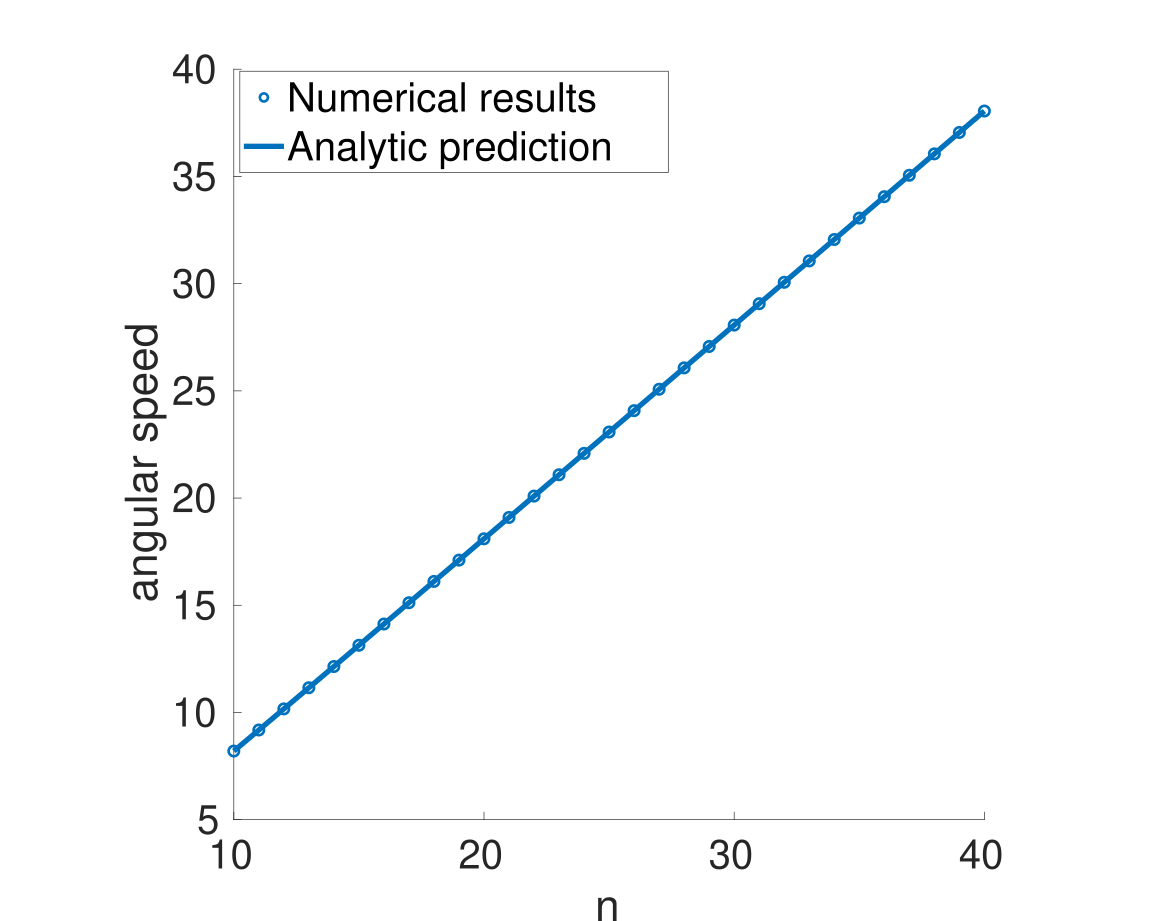}\label{fig:gamma1speeda}}
    \subfigure[]{\includegraphics[width=0.31\linewidth]{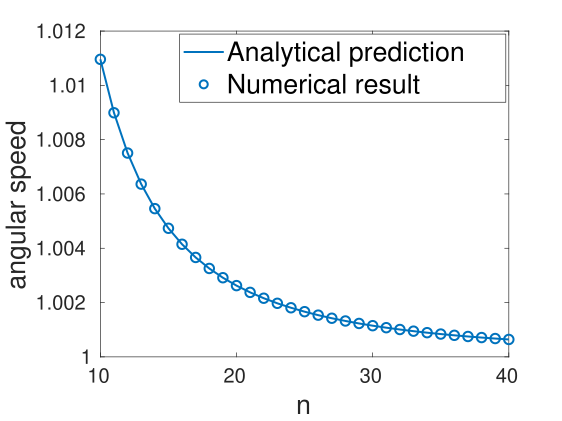}\label{fig:gamma1speedr}}
    \caption{Final angular speed $\Omega$ in a star graph for the classical (a), hubs-attracting (b) and hubs-repelling Laplacian (c).} \label{fig:gamma1speed}
\end{figure*}

\begin{figure*}[ht]
    \centering
    \subfigure[]{\includegraphics[width=0.32\linewidth]{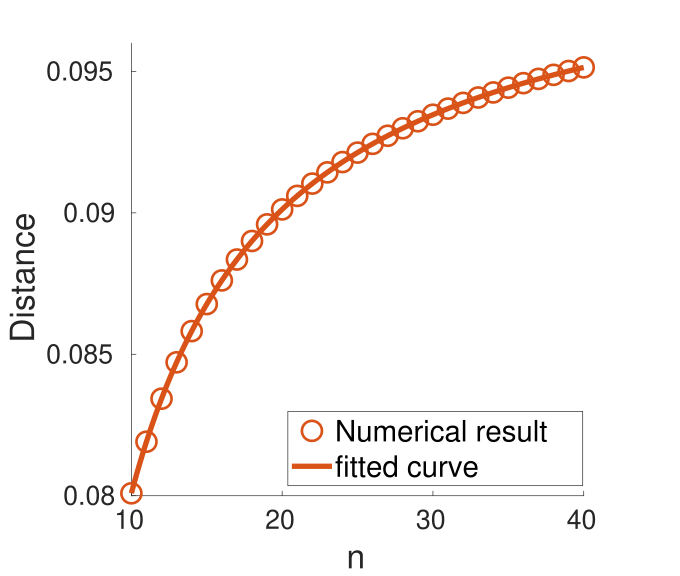}\label{fig:gamma1distclas}}
    \subfigure[]{\includegraphics[width=0.32\linewidth]{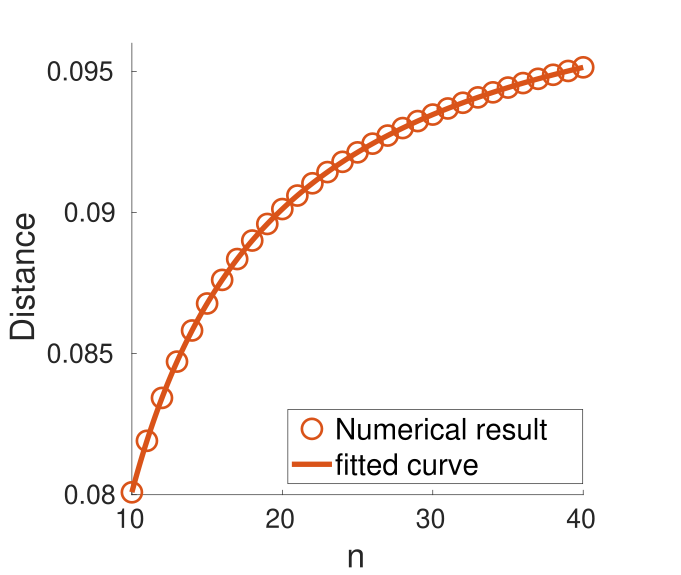}\label{fig:gamma1distattract}}
    \subfigure[]{\includegraphics[width=0.32\linewidth]{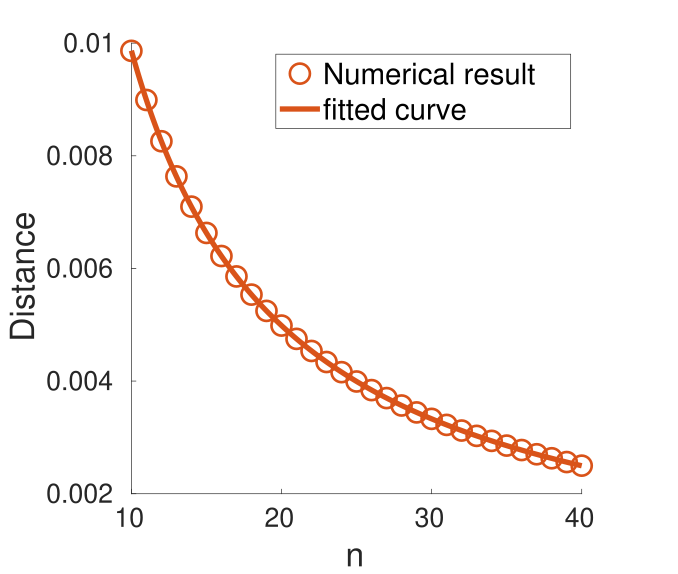}\label{fig:gamma1distrepell}}

    \caption{Cluster distance $\Phi$ in a star graph for the classical (a), hubs-attracting (b) and hubs-repelling Laplacian (c).} \label{fig:gamma1dist}
\end{figure*}

Next, we study the occurrence of ES. While for the classical Laplacian coupling it is known that the transition is first-order \cite{ES_1, f-d}, the other two Laplacians are analyzed here for the first time. To investigate the transition from disorder to synchronization, we monitor the Kuramoto order parameter $r$ as a function of the coupling parameter $\sigma$. The presence of the first-order transition is usually associated with an hysteresis in the value of the critical coupling when calculated by increasing or by decreasing $\sigma$. For this reason, we have computed the forward and backward continuation of $r$ increasing/decreasing $\sigma$ at steps of $0.01$. The numerical calculations have been carried out on a star graph with $n=10$, and so that $\omega_1=9$. The values of the natural frequencies of the leaves have been obtained by adding to the nominal value of $\omega_l=1$ a small perturbation, in the order of $o(10^{-3})$, accounting for some spatial disorder in the system.

The results for the three Laplacians are shown in Fig.~\ref{fig:rcontinuation}, together with the values $\sigma_c$ and $\sigma_l$ obtained by mean of Theorems \ref{teor:analytic} and \ref{teor:odes}.

Although we find an hysteresis in all the three cases, nevertheless, the new operators show interesting differences. First, both degree-biased Laplacians require a smaller coupling parameter $\sigma$ for the transition with respect to the classical Laplacian. This confirms the analytic prediction for $\sigma_l$. On the other hand, the backward desynchronization perfectly fits the analytic calculations, being equal for the classical and hubs-attracting Laplacian while it is very close to $0$ for the hubs-repelling operator.

It is important to note that, even for small $n$, the closeness between $\sigma_c$ and $\sigma_l$ in the hubs-attracting Laplacian case makes the transition to look smooth, but there is still a small hysteresis gap.

\begin{figure*}[ht]
    \centering
    \subfigure[]{\includegraphics[width=0.32\linewidth]{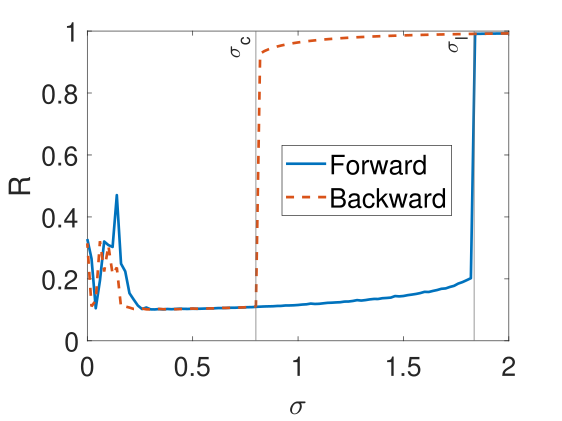} \label{fig:clas1}}
    \subfigure[]{\includegraphics[width=0.32\linewidth]{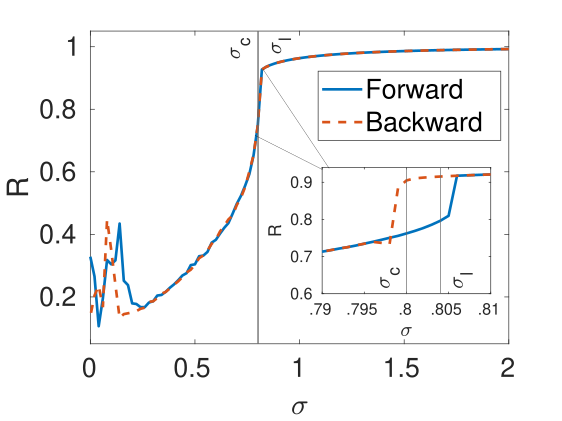} \label{fig:attract1}}
    \subfigure[]{\includegraphics[width=0.32\linewidth]{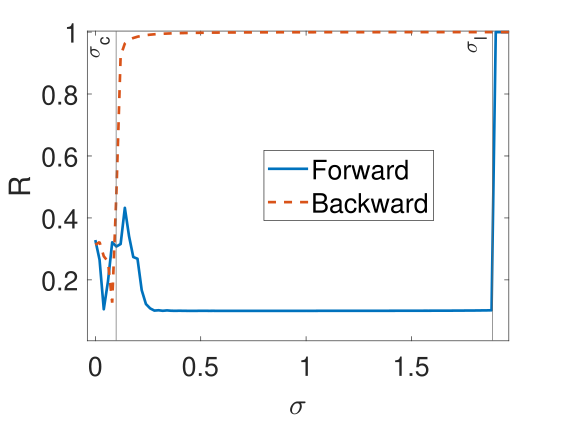} \label{fig:repell1}}
    \caption{Kuramoto order parameter $r$ vs. $\sigma$ for the classical (a), hubs-attracting (b) and hubs-repelling Laplacian (c).
    \label{fig:rcontinuation}}
\end{figure*}

If we compute the values of $R_L$ and $R_H$ to study separately the synchronization of the cluster of leaves and the synchronization of the hub with the leaves, we find that the values of $R_L$ are very close to $1$ in all cases, meaning that the cluster of leaves is quickly formed and holds for any $\sigma$. On the other hand the shape of $R_H$ is similar to the usual order parameter $R$, meaning that the hub synchronization is the one causing the first-order transition. The hubs-repelling case is the only one in which we can observe the influence of the ES in the parameter $R_L$, as the transition mainly affects the hub and this operator makes this central node to influence more to the leaves.

Another interesting difference in synchronization in star graphs with diverse Laplacian operators is observed when spatial disorder increases (Fig. \ref{noise}). In fact, when spatial disorder increases, the inhomogeneity among the frequencies increases while the correlation between frequency and degree at each leaf decreases. The behavior of the classical Laplacian is more stable with respect to the spatial disorder as, for every nonzero disorder, the hysteresis can be clearly seen by the difference between the forward and backward continuation. On the contrary, the hubs-repelling Laplacian does not desynchronize when the disorder is too small, so it is more difficult to appreciate the end of the hysteresis gap, and the hysteretic behaviour disappears when the size of the disorder increases above $10^{-2}$. The hubs-attracting Laplacian behaves similarly, but it is harder to appreciate due to the smallness of the hysteresis gap, so in Fig. \ref{attractnoise} only the cases where the disorder is nonzero are plotted and the axis limits are changed.

\begin{figure*}[ht]
    \centering
    \subfigure[]{\includegraphics[width=0.32\linewidth]{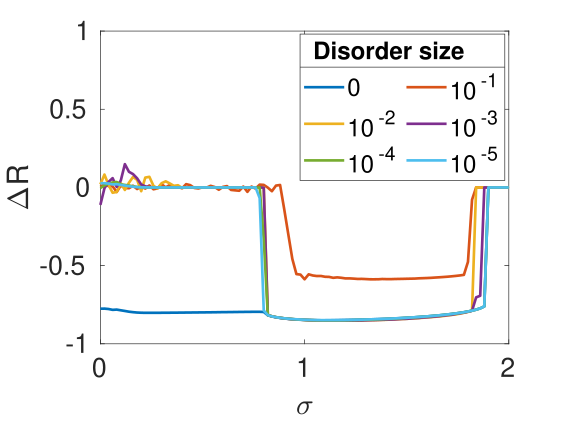} \label{clasnoise}} 
    \subfigure[]{\includegraphics[width=0.32\linewidth]{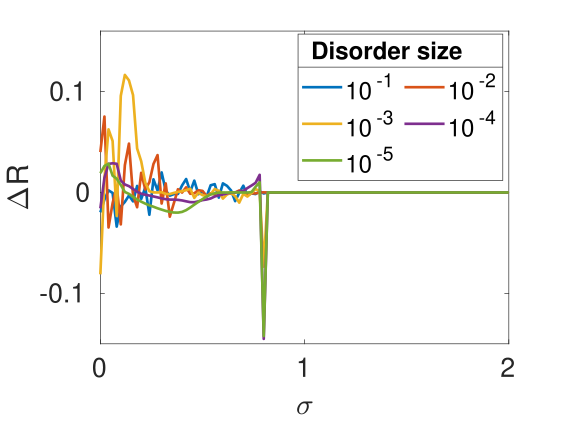}\label{attractnoise}}
    \subfigure[]{\includegraphics[width=0.32\linewidth]{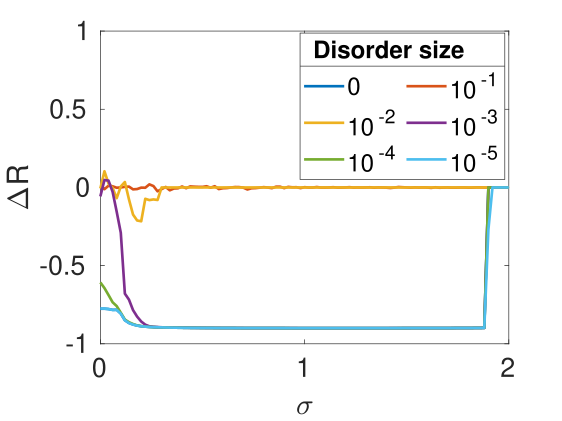}\label{repellnoise}}
    \caption{Difference between forward and backward continuations of the Kuramoto order parameter in a star graph for different values of the spatial disorder for the classical (a), hubs-attracting (b) and hubs-repelling Laplacian (c).}\label{noise} 
\end{figure*}

\subsubsection{Cayley irregular graphs}

First, we study system~\eqref{eq:Kuramoto} when the interaction network is modeled by an irregular Cayley graph of maximum degree $5$. Cayley graphs (sometimes named Bethe lattices) are trees are trees where there is a central node with degree k, then there is a first orbit of nodes of degree k-1, followed by a second orbit of nodes of degrees k-2, and so for up to arrive to an orbit of nodes with degree one.

From the simulations we observed that the synchronization in clusters is kept for the three operators, ordered from the highest to the lowest degree. We can use this defined order to study the distance between consecutive clusters. In Fig.~\ref{fig:CayleyDist} we can observe that this distance between clusters decays as the inverse of the coupling parameter $\sigma$. Then, if we make $\sigma \to \infty$, it is expected that the clusters also synchronize and all the nodes converge to the same point, such as when there is no frequency-degree correlation in the Kuramoto model.

\begin{figure}[ht]
    \centering
    \includegraphics[width =\linewidth]{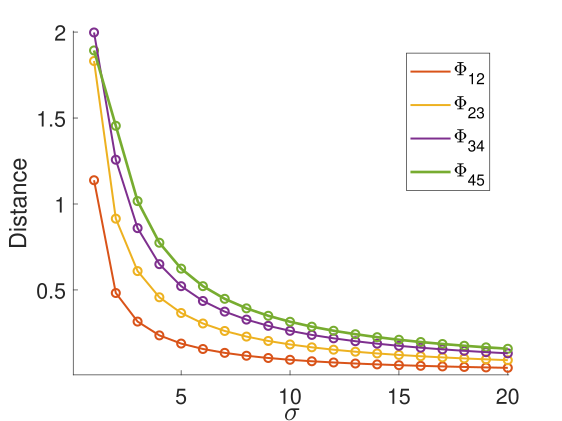} \label{fig:CayleyDist}
    \caption{Distances between consecutive clusters of the Cayley irregular graph with maximum degree $5$ for the hubs-attracting Laplacian.}
\end{figure}

Next we compute the maximum eigenvalue $\lambda$ of the cluster distance matrix, we obtained that $\lambda_r \le \lambda_a \le \lambda_c$, so clusters are closer if we use the hubs-repelling Laplacian, but any of the biased models decreases the distance between clusters with respect to the model with the classical Laplacian. 

On the other hand, we can observe from the simulations that the final angular speed of the nodes follows the same behaviour as for the star graphs $(\Omega_r < \Omega_c < \Omega_a)$. It is interesting to note that this angular speed is still invariant upon changing the coupling parameter $\sigma$ as long as it is over the value where the system synchronizes. 

In order to see if the explosive transition is kept in this kind of network, we compute its backward continuation too. As we can see in Fig. \ref{Cay}, there is hysteresis in all three cases. The behaviour is similar to that observed in the stars. The hysteresis gap is bigger in the hubs-repelling case compared to the one in the case of the classical Laplacian, while it is much smaller for the hubs-attracting Laplacian.

\begin{figure*}[ht]
    \centering
    \subfigure[]{\includegraphics[width=0.32\linewidth]{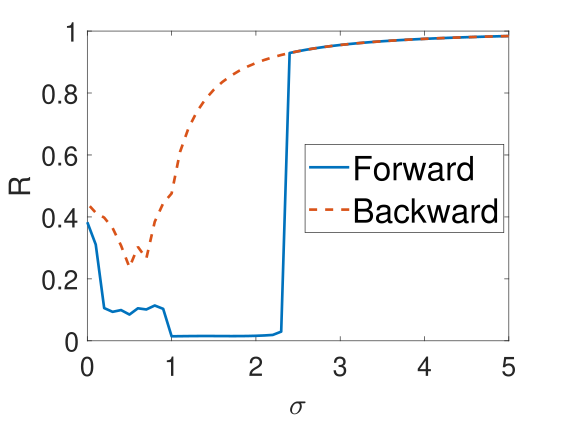}\label{CayHystc}} 
    \subfigure[]{\includegraphics[width=0.32\linewidth]{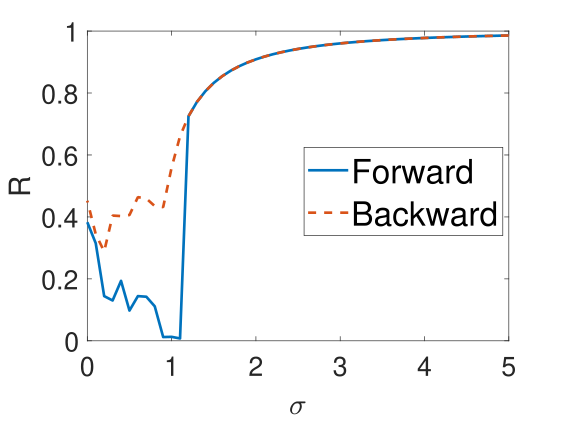}\label{CayHysta}}
    \subfigure[]{\includegraphics[width=0.32\linewidth]{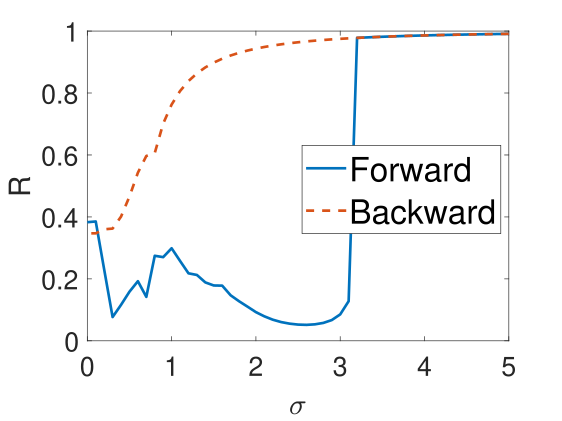}\label{CayHystr}}
    \caption{Forward and backward continuations for the Cayley irregular graph with maximum degree $5$ for the classical (a), hubs-attracting (b) and hubs-repelling Laplacian (c).} \label{Cay}
\end{figure*}

We also calculated the evolution of the Kuramoto order parameter along time for each of the clusters. They show that the time taken by each cluster to synchronize follows the same order than the critical coupling needed to synchronize the system. That is, the use of the hubs-attracting Laplacian reduces the synchronization time of all clusters compared to the use of the classical Laplacian, while using the hubs-repelling Laplacian increases it. We can also observe that, as the degree of the cluster decreases, there are more nodes in such cluster and thus it needs longer times to synchronize for any of the operators. 

\subsubsection{Scale-free random networks}

We now study the Kuramoto model with degree-frequency correlation in Barabási-Albert networks, which are scale-free random graphs, in which the degree follows a power-law distribution with exponent $-3$. In article [\onlinecite{ES_1}] the authors proved that hysteresis occurred in this type of networks as well. 

As we have only studied tree-like networks, we start by studying scale-free trees. That is, random graphs following the scheme proposed by Barabási and Albert but where the new nodes only attach to one existent node, so there are no cycles.

In Fig.~\ref{BATree} we observe a clear difference between the classical Laplacian and the degree-biased Laplacians. While the classical Laplacian keeps showing ES, the hubs-attracting Laplacians shows a smaller first order transition that does not reach $R=1$. The hubs-repelling Laplacian also shows hysteresis, but the discontinuous transition is not clearly seen. The difference now can be explained because, in the previous networks, the degrees decreased monotonically, so there was a preferred direction in the whole network (from the hub to the leaves or vice versa). But this scale-free trees are more asymmetric, allowing the hub to be connected to a low degree node which has another neighbour with higher degree. This results in lower values of the Kuramoto order parameter $R$.

\begin{figure*}[ht]
    \centering
    \subfigure[]{\includegraphics[width=0.32\linewidth]{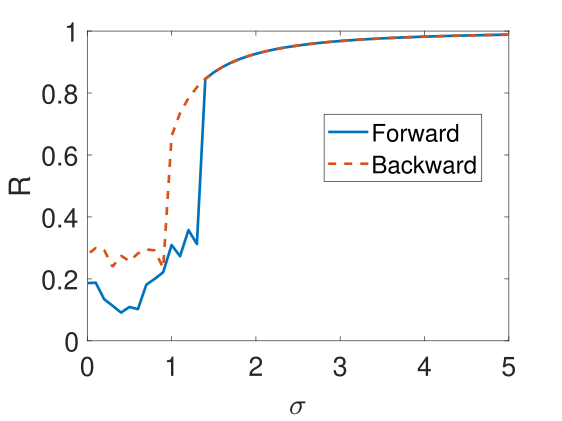}} 
    \subfigure[]{\includegraphics[width=0.32\linewidth]{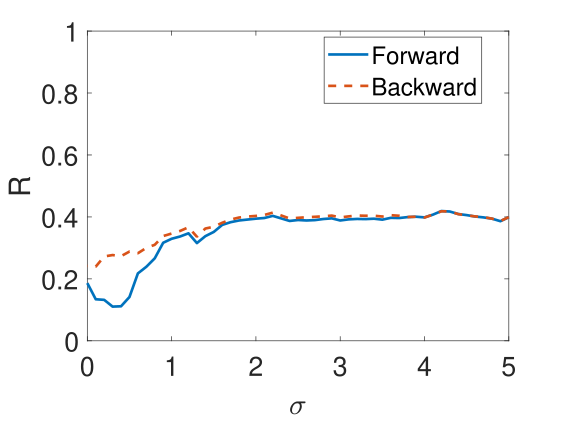}}
    \subfigure[]{\includegraphics[width=0.32\linewidth]{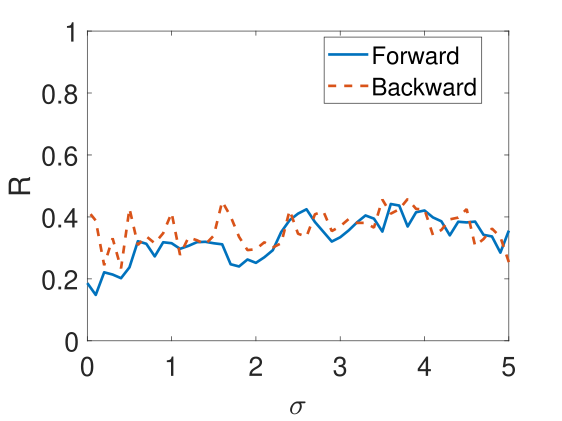}}
    \caption{ Kuramoto order parameter for the forward and backward continuations in a BA tree for the classical (a), hubs-attracting (b) and hubs-repelling Laplacian (c).} \label{BATree}
\end{figure*}

Although ES is reduced in scale-free trees, the hysteresis region can be clearly appreciated. But the asymmetries that contribute to this behaviour should be magnified by cycles. That is, general BA networks should reduce the hysteresis occurring when using any of the degree-biased Laplacians.

This is verified in Fig.~\ref{BANet}, where we see that the hysteresis is reduced for any of the three Laplacians. This shows that cycles generally reduce the hysteretic gap. Another interesting fact is that the values of the Kuramoto order parameter are higher in this case than in the tree network when using the hubs-attracting Laplacian.

In general, we can say that both the asymmetries in the degree of the nodes and cycles contribute to naturally inhibit the ES when we use the degree-biased Laplacians.

\begin{figure*}[ht]
    \centering
    \subfigure[]{\includegraphics[width=0.32\linewidth]{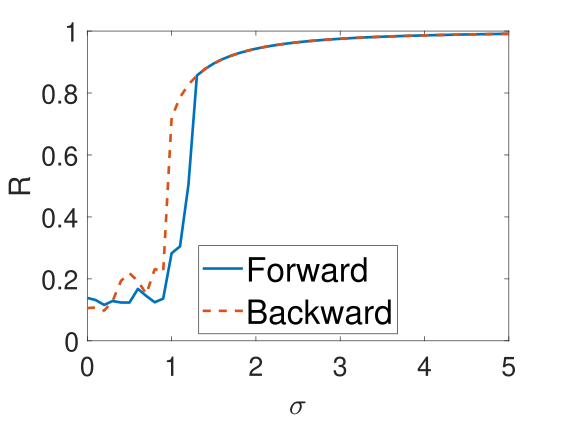}} 
    \subfigure[]{\includegraphics[width=0.32\linewidth]{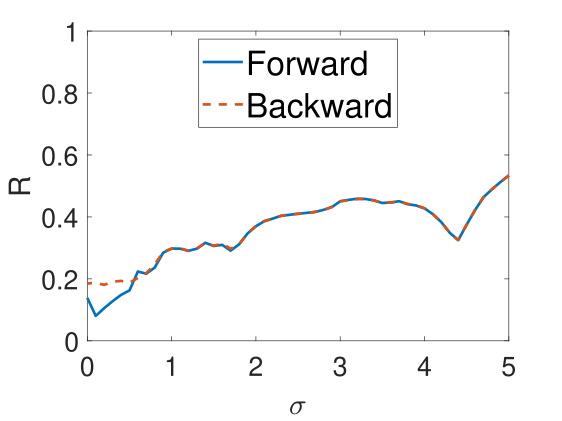}}
    \subfigure[]{\includegraphics[width=0.32\linewidth]{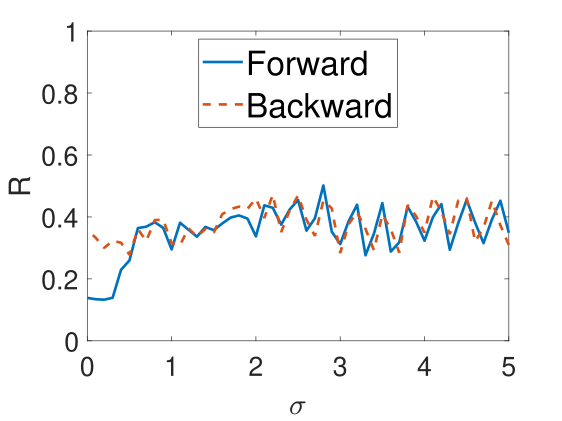}}
    \caption{ Kuramoto order parameter for the forward and backward continuations in a BA network for the classical (a), hubs-attracting (b) and hubs-repelling Laplacian (c).} \label{BANet}
\end{figure*}

\section{Suppression of the ES in neural networks}

It is known that explosive synchronization is causally related to
some pathological stets of the brain, such as epilepsy \cite{Epilepsy} and
chronic pain \cite{Pain_1}, as well as with the states related to consciousness
recovery \cite{Normal_brain}. It has been discovered that ES coexists with classical
synchronization in the Kuramoto model \cite{Coexistence}. Therefore, it is intriguing
to know how, when neuronal systems synchronize, they avoid ES to work
normally, i.e., avoiding such pathological states as the one mentioned
before. The previous section gives us some hints about the fact that
the degree-biased processes might be a good proxy for controlling
this situation. That is, we have shown that, depending on the
structure of the neural network (namely the asymmetry on the degrees
and the cycles in the network), the first order transition to the
synchronized state can be reduced. 

However, the degree-biased control alone could not be responsible
for the transition from ES to classical synchronization in neuronal
systems in a natural way. For instance, it is known that neurons are
of diverse shape \cite{startburs_1}, such that the topological structures of their
networks could be very different. This translates into a variety of
behaviours depending on the degree of asymmetry of such structures.
Those systems closer in their structure to Cayley irregular graphs
will most likely show ES than those whose shape looks like a scale-free
tree. 

Therefore, if we consider a single neuron with structure close to
Cayley irregular graphs it will synchronize explosively. This could
be advantageous to this neuron \cite{single_neuron1,single_neuron2} which will be
in a synchronous state in a very short period of time. However, the
problem emerges if, when this neuron makes synaptic connections with
many others forming a neuronal system, the synchronization of the
system as a whole is also explosive. From a topological perspective
this synapsis among groups of neurons creates, in the most general
and realistic way, cycles in the network structure. Therefore, we
will study what is the influence of the presence of cycles on the
maintenance of ES in a network when the process is controlled by the
degree-biased operators.

The first thing that we need to consider is the length of the cycles
that can be formed during the synapsis of two heavily branched neurons.
Due to the branched nature of these neurons it is very difficult that
the cycles formed among them are of small length, e.g., triangles.
It is more likely that large cycles emerge from the synaptic connections among
three of more neurons due to the geometric constraints produced by
the neuronal branching.

\begin{figure}[h]
\includegraphics[width=0.85\linewidth]{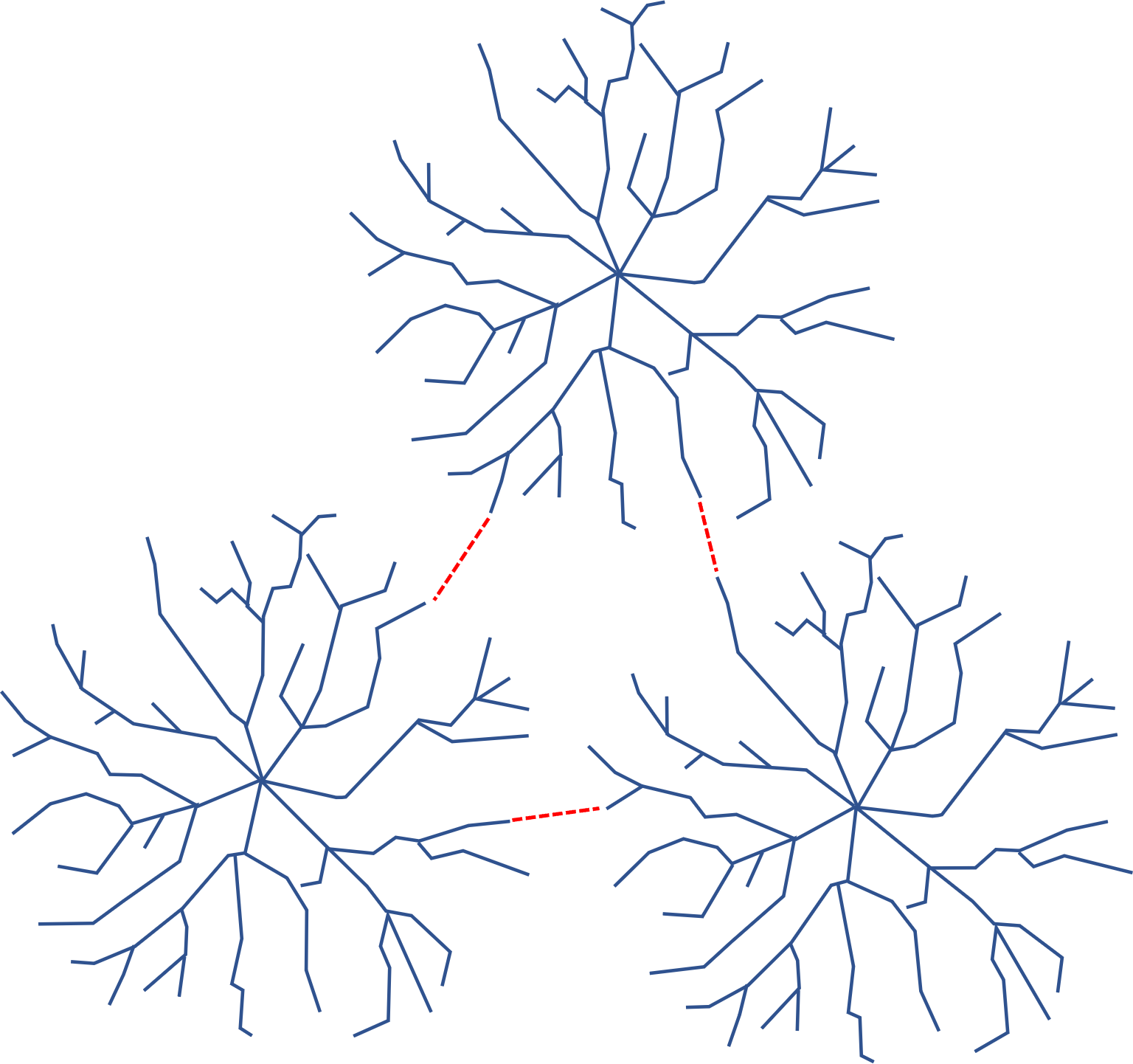}
\caption{Schematic illustration of the potential formation of cycles by the
synaptic connection between several neurons.}
\label{Synapsis}
\end{figure}

To test the influence of cycles on ES under degree-biased control
we simulate the effects of adding cycles to the Cayley irregular graph presented before. We have
studied the influence of two kind of cycles, small and large. For
small cycles we have considered triangles while for the large ones
we study the cycles of maximum length connecting a node of degree $1$ back with the node of maximum degree. The most remarkable
result obtained here is that when the connections between branched
trees create long cycles the synchronization process controlled by
the classical Laplacian continues to be explosive (see Fig.~\ref{BALong}, panel a). This means that in a process which is
not biased by degrees, the ES will remain after the single neuron
makes its synaptic connections. This will possibly always induce pathological
states, which is an antinatural result. Therefore, we may infer, or
suggest, that the process of synchronization is not controlled by
such Laplacian operator. On the contrary, in a process which is biased
by degrees we observe that ES drastically reduces as soon as long cycles
emerge. This is observed in Fig.~\ref{BALong} panels b and c, where in spite of the presence of certain
hysteresis, the first order transition has disappeared. We
can consider this process as a topologically-controlled transition
from explosive to classical synchronization, which may be used at
neuronal systems to avoid the appearance of pathological states after
the synaptic creation of neuronal networks. The situation is more
tricky to analyze in the case of small cycles. Although these cycles
are very hard to be formed during the synaptic connections of highly
branched neurons we explore their effects anyway. First, we observe
that in this case the ES disappears for both the degree-biased and the classical process, showing nevertheless some hysteresis as in the previous case (see Fig.~\ref{BAShortclas}, \ref{BAShortattract}).
It is important to remark that the hubs-repelling Laplacian behaves more erratic when this short cycle appears in the Cayley irregular graph (see Fig.~\ref{BAShortrepell}). 

\begin{figure*}[ht]
    \centering
    \subfigure[]{\includegraphics[width=0.32\linewidth]{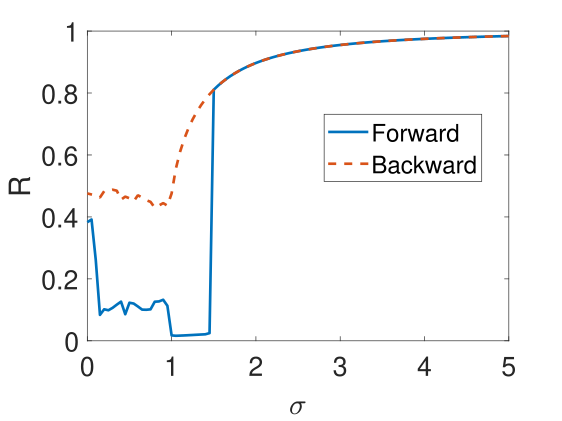}} 
    \subfigure[]{\includegraphics[width=0.32\linewidth]{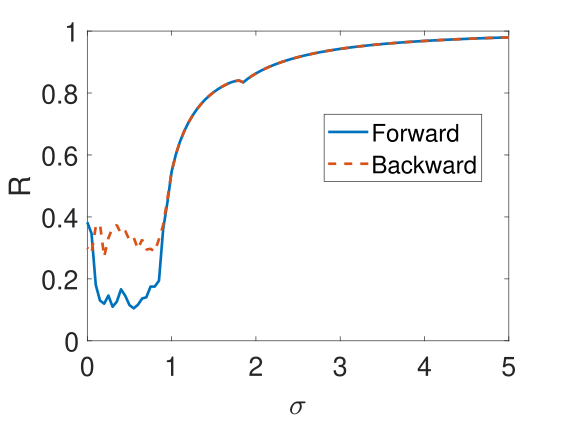}}
    \subfigure[]{\includegraphics[width=0.32\linewidth]{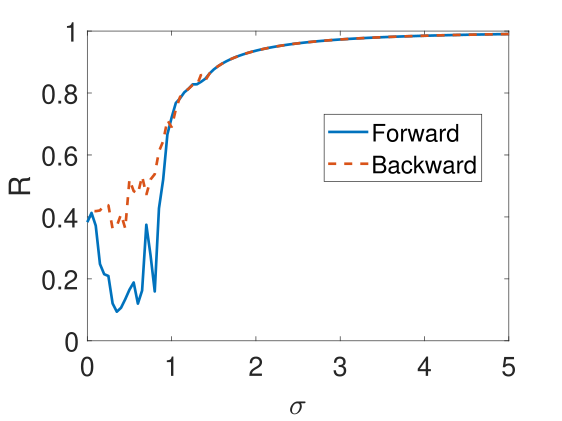}}
    \caption{ Kuramoto order parameter for the forward and backward continuations in a Cayley graph with a cycle of long length added for the classical (a), hubs-attracting (b) and hubs-repelling Laplacian (c).} \label{BALong}
\end{figure*}

\begin{figure*}[ht]
    \centering
    \subfigure[]{\includegraphics[width=0.32\linewidth]{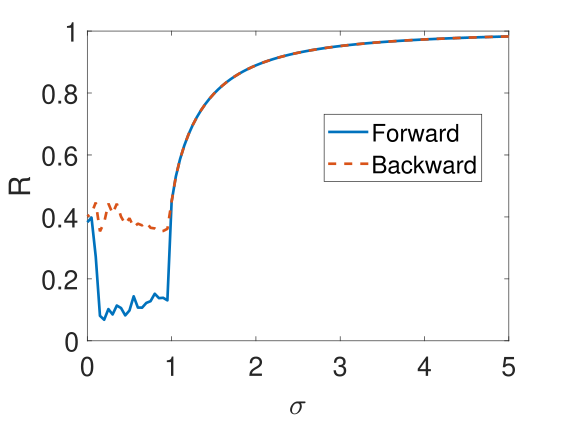} \label{BAShortclas} }
    \subfigure[]{\includegraphics[width=0.32\linewidth]{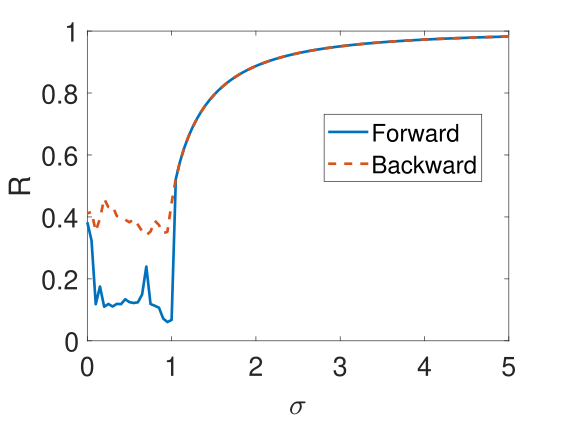} \label{BAShortattract} }
    \subfigure[]{\includegraphics[width=0.32\linewidth]{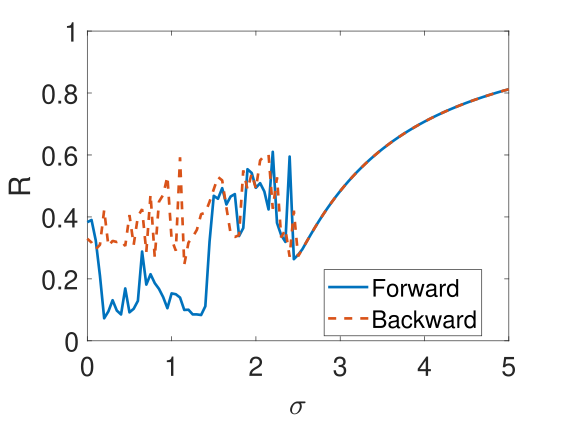} \label{BAShortrepell} }
    \caption{ Kuramoto order parameter for the forward and backward continuations in a Cayley graph with a cycle of small length added for the classical (a), hubs-attracting (b) and hubs-repelling Laplacian (c).} \label{BAShort}
\end{figure*}

Different types of neuron may have different shapes, closer to scale-free trees. These graphs, as shown in Fig.~\ref{BATree}, still show first order transitions when using any of the three Laplacians that we studied, with the only difference that they do not reach full synchronization due to the spatial asymmetry of the node degrees. Nevertheless, when including loops in this graphs, results are similar to those in Figs.~\ref{BALong} and \ref{BAShort}, so the ES disappears only for the hubs-biased Laplacians cases.

Based on the previous results we hypothesize here that degree-biased
control of ES could be a natural way of changing from the ES of a
single highly branched neuron to the standard synchronization of an
arrangement of neurons forming long cycles. The appearance of pathological
states emerges if the system lost its degree-biased control and pass
to a regime in which it is controlled by an operator like the classical
Laplacian. In favor of our hypotheses we can mention the result of
Münch and Werblin \cite{startburst_2}, who have found that the symmetric
interactions within a homogeneous starburst neuron network can lead
to robust asymmetries in dendrites of starburst amacrine cells. That
is, like in our case, the communication between neuronal cells could
be asymmetric in the sense that they could be biased by the degrees
of the nodes. This also coincides with strategies, like the one proposed
by Kaplan et al. \cite{Pain_2} consisting in targeting network hubs with
noninvasive brain stimulation to treat fibromyalgia in patients. In
any case our hypothesis is experimentally falsifiable to support or
reject it.

\section{Conclusions}
Explosive synchronization, in which networked Kuramoto oscillators
display an abrupt transition from a non-synchronized state to a synchronized
one, is certainly a very attractive phenomenon. Thinking to the phenomenon immediately brings to our mind the processing
of information by human brain. The human brain is able to take actions,
which require synchronization in the neuronal system, just in fractions
of a second. However, it seems that ES is not
beneficial for neuronal systems, but it could be causally-linked to
some brain pathologies. The most intriguing finding is that this first
order transition coexists with normal synchronization in a model like
the Kuramoto one. Then, how a neuronal system avoids entering into
pathological states by naturally suppressing ES?
We have shown here that a sort of topological control can emerge in
a neuronal-like system if the synchronization process is controlled
by a degree-biased diffusion. That is, by a diffusive process in which
a node passes its state to its nearest neighbors in an asymmetric
way, giving preference to those nodes with the highest (or lowest)
degrees. In this scenario, a branched tree displays ES,
which disappears as soon as it connect with other branched trees forming
cycles. This scenario is not allowed in the case in which the synchronization
is dominated by the classical process in which the state of a node
is passed to its nearest neighbors in a nonbiased way. This model
opens a new scenario for thinking about the potentially important
role that degree-biased ES and its topological
regulation by means of cycles create for neuronal systems.

\begin{acknowledgments}
M.M. thanks financial support PRE2020-092875 by MCIN/AEI /10.13039/501100011033 and by FSE invierte en tu futuro. E.E. thanks Grant PID2019-107603GB-I00
by MCIN/ AEI /10.13039/501100011033.
\end{acknowledgments}

\end{document}